\newtheorem{theorem}{Theorem}
\newtheorem{lemma}[theorem]{Lemma}
\newtheorem{claim}{Claim}
\newcommand{\dx}{\ensuremath{\delta_x}}
\newcommand{\dy}{\ensuremath{\delta_y}}
\newcommand{\dz}{\ensuremath{\delta_z}}
\newcommand{\FF}{\ensuremath{\mathbb{F}}}
\newsavebox{\hexnetwork}
\savebox{\hexnetwork}{
\gasset{Nfill=y,Nw=.3,Nh=.3,Nmr=.3}

\node(x4y1)(0,3.46){}

\node(x3y1)(-1,1.73){}
\node(x3y2)(1,1.73){}

\node(x2y1)(-2,0){}
\node(x2y2)(0,0){}
\node(x2y3)(2,0){}

\node(x1y1)(-3,-1.73){}
\node(x1y2)(-1,-1.73){}
\node(x1y3)(1,-1.73){}
\node(x1y4)(3,-1.73){}
}
\newcounter{mycommentcounter}
\begin{document}
\baselineskip 12pt
\suppressfloats[t]

\begin{center}
{\huge\bf Energy Benefit of\\ Network Coding for Multiple Unicast in Wireless Networks } \\
\vspace{5mm}
\begin{tabular}{cc}  Jasper Goseling & Jos H.\ Weber \\
IRCTR/CWPC, WMC Group & IRCTR/CWPC, WMC Group \\
Delft University of Technology & Delft University of Technology \\
The Netherlands & The Netherlands \\
\verb+j.goseling@tudelft.nl+ & \verb+j.h.weber@tudelft.nl+
\end{tabular}
\end{center}

%
%
%
\begin{abstract}
\noindent We show that the maximum possible energy benefit of network coding for multiple unicast on wireless networks is at least $3$. This improves the previously known lower bound of $2.4$ from~\cite{effros06tiling}.
\end{abstract}

%
%
%
\section{Introduction} \label{sec:intro}
Traditional routing solutions for communication networks keep independent streams of data separate. The idea of network coding is to allow nodes in the network to combine independent data streams. Some of the benefits of network coding that have been demonstrated are increased throughput, reduced resource consumption and increased security, see e.g.~\cite{ncprimer} and the references therein. Our interest is in the reduction in energy consumption in wireless networks offered by network coding. 

The potential of network coding to reduce energy consumption 
 is demonstrated using the example given in Figures~\ref{fig:linerouting} and~\ref{fig:linenc} in which nodes $A$ and $C$ need to exchange bits $x$ and $y$. Figure~\ref{fig:linerouting} shows a routing solution using $4$ transmissions, which is the minimum possible number if only routing is allowed. One can observe that in this case transmissions $\textcircled{\scriptsize 1}$ and $\textcircled{\scriptsize 2}$ are useful only to nodes $C$ and $A$, respectively. The network coding solution from Figure~\ref{fig:linenc} uses $3$ transmissions. Network coding allows transmission $\textcircled{\scriptsize 3}$ to be useful for both $A$ and $C$, increasing efficiency. Without network coding $4$ transmissions are required, whereas the network coding solution uses $3$ transmissions. We say that for this example the energy benefit of network coding is $\frac{4}{3}$. 
\begin{figure}
\begin{center}
\begin{minipage}[t]{.45\linewidth}
\begin{center}
\begin{picture}(6,3.5)(-3,-1)
	\put(-3.5,2.2){\makebox(2,1){$\scriptstyle A$}}
	\put(-1,2.2){\makebox(2,1){$\scriptstyle B$}}
	\put(1.5,2.2){\makebox(2,1){$\scriptstyle C$}}
	

	\node(a1)(-2.5,2){}
	\node(b1)(0,2){}
	\node(c1)(2.5,2){}
	\node(a2)(-2.5,1){}
	\node(b2)(0,1){}
	\node(c2)(2.5,1){}
	\node(a3)(-2.5,0){}
	\node(b3)(0,0){}
	\node(c3)(2.5,0){}
	\node(a4)(-2.5,-1){}
	\node(b4)(0,-1){}
	\node(c4)(2.5,-1){}
	
	\drawedge(a1,b1){$\scriptstyle x$}
	\drawedge[ELpos=60](b2,c2){$\scriptstyle x$}
	\drawedge[ELpos=25,linewidth=0,AHnb=0](b2,c2){$\textcircled{$\scriptstyle 1$}$}
	\drawedge[ELside=r](c3,b3){$\scriptstyle y$}
	\drawedge[ELside=r,ELpos=60](b4,a4){$\scriptstyle y$}
	\drawedge[ELside=r,ELpos=25,linewidth=0,AHnb=0](b4,a4){$\textcircled{$\scriptstyle 2$}$}
\end{picture}
\caption{Routing solution exchanging bits $x$ and $y$ between nodes $A$ and $C$. Transmissions $\textcircled{\scriptsize 1}$ and $\textcircled{\scriptsize 2}$ are only useful to nodes $C$ and $A$ respectively.\label{fig:linerouting}}
\end{center}
\end{minipage}
\hspace{.05\linewidth}
\begin{minipage}[t]{.45\linewidth}
\begin{center}
\begin{picture}(6,3.5)(-3,-1)
	\put(-3.5,2.2){\makebox(2,1){$\scriptstyle A$}}
	\put(-1,2.2){\makebox(2,1){$\scriptstyle B$}}
	\put(1.5,2.2){\makebox(2,1){$\scriptstyle C$}}
	
	\node(d1)(-2.5,1.5){}
	\node(e1)(0,1.5){}
	\node(f1)(2.5,1.5){}
	\node(d2)(-2.5,0.5){}
	\node(e2)(0,0.5){}
	\node(f2)(2.5,0.5){}
	\node(d3)(-2.5,-.5){}
	\node[NLdist=.5,NLangle=270](e3)(0,-.5){$\textcircled{$\scriptstyle 3$}$}
	\node(f3)(2.5,-.5){}
	
	\drawedge(d1,e1){$\scriptstyle x$}
	\drawedge[ELside=r](f2,e2){$\scriptstyle y$}
	\drawedge(e3,f3){$\scriptstyle x+y$}
	\drawedge[ELside=r](e3,d3){$\scriptstyle x+y$}
\end{picture}
\caption{Network coding solution. Transmission $\textcircled{\scriptsize 3}$ is useful for both $A$ and $C$. The benefit of network coding for this configuration is $\frac{4}{3}$. \label{fig:linenc}}
\end{center}
\end{minipage}
\end{center}
\end{figure}

The energy benefit of network coding depends on the network topology and the traffic pattern. One can e.g.\ show that the network obtained by extending the network from the previous example to a network of many nodes on a line, allows network coding to reduce energy consumption by a factor $2$. This example was first presented in~\cite{wu04techrep}, where the network coding benefits w.r.t.\ throughput where discussed, the energy benefit, however, follows easily. It was shown in~\cite{effros06tiling} that there exist networks for which this factor is $2.4$.
Our aim is to find the maximum possible energy benefit that network coding can offer for multiple unicast traffic in wireless networks. The contribution in this work is a new lower bound of $3$ to this benefit.

In Section~\ref{sec:model} we define the network and traffic model that we use and state our problem more precisely.
An overview of known results in the literature is given in Section~\ref{sec:previous} after which we present our result in Section~\ref{sec:result}. Section~\ref{sec:construction} is used to prove this result. Section~\ref{sec:discussion} provides a discussion on the obtained results and possible future work.

%
%
%
\section{Model and Problem Statement} \label{sec:model}
Time is slotted. To simplify notation in Section~\ref{sec:construction} we allow nodes to transmit more than once in each time slot. Alternatively we could have rescaled time such that only one transmission from each node occurs in a time slot. All transmissions in the network are broadcasts, i.e.\ transmissions are received by all neighbours. The neighbours of a node are all other nodes in the network that are within a transmission range that is equal and fixed for all nodes in the network.

Transmission is noiseless, no errors occur and there is no interference at the receivers. Although interference does occur in realistic networks, we do not take it into account here. If interference would be part of the model, not all nodes could transmit in the same time slot, at the expense of the throughput, but the number of transmissions that is required would be the same. Since we are not interested in throughput, but only in energy consumption in the network, we do not take interference into account.

The traffic pattern that we consider is multiple unicast. All symbols are from the field $\FF_2$, i.e. they are bits and addition corresponds to the xor operation. The source of each unicast connection has a sequence of source symbols that need to be delivered to the corresponding receiver. For a source $x$, e.g., we have
\begin{equation*} 
	x = \begin{bmatrix}\ \dots & x(t-1) & x(t) & x(t+1) & \dots\ \end{bmatrix},
\end{equation*}
with $x(t)=0$, for $t\leq 0$.
We call a network together with a set of unicast connections a configuration.

%
%
%
We are interested in the energy consumption in the network, which we define as the average over time of the number of transmissions used to deliver one symbol from each unicast connection. The energy benefit of network coding for a configuration is defined as the ratio of
the minimum energy consumption of any routing solution and the minimum energy consumption of any network coding solution, i.e.\
\begin{equation*}
\parbox{7cm}{energy benefit of network coding for a wireless multiple unicast configuration} = 
\frac{\parbox{6cm}{minimum energy consumption of any routing solution}}
{\parbox{6cm}{minimum energy consumption of any network coding solution}}.
\end{equation*}

 In this paper we will refer to this ratio as the energy benefit of network coding, or simply as the benefit of network coding.

%
%
%
\newpage
\section{Previous Work}  \label{sec:previous}
The best known lower bound on the maximum energy benefit of network coding over all possible configurations is $2.4$~\cite{effros06tiling}.
The network code that was constructed to show this lower bound, satisfies the property that data symbols transmitted by a node are linear combinations only of source symbols that have been successfully decoded by that node. The rationale behind this is that in this way information in the network can be constrained to the neighbourhood of the path between the source and destination of the corresponding unicast session, a network code design heuristic that was introduced in~\cite{katti06xor}.

In~\cite{liu07scale} it was shown that for random networks the energy benefit of network codes satisfying the above property is upper bounded by $3$. It is not known if $3$ is also an upper bound on the energy benefit of arbitrary network codes on arbitrary configurations, that is allowing codes that do not satisfy the above property on networks with arbitrary topology and traffic requirements. 

In this work we present a new lower bound on the energy benefit of network coding. In the network code that we construct, nodes transmit linear combinations of data symbols, for which the corresponding source symbols have not necessarily been decoded by these nodes. Our code therefore does not satisfy the above property.

%
%
%
\section{Result} \label{sec:result}
We present a result that is based on the configuration that consists of the network in which the nodes are located on the hexagonal lattice with connectivity as depicted in Figure~\ref{fig:hexnetwork}, together with the unicast sessions that are depicted in Figures~\ref{fig:hextrafficx}, \ref{fig:hextrafficy} and~\ref{fig:hextrafficz}. The figures depict a network with $4$ nodes on each edge and $4$ unicast sessions of each type, i.e. $x_i$, $y_i$ and $z_i$, $i=1,\dots,4$. In general, we will consider networks with $K$ nodes on the network edges and $K$ sessions of each type. The network topology that we consider is equal to the one used in~\cite{effros06tiling}. Our traffic pattern, however, is slightly different. We discuss this in more detail in Section~\ref{sec:discussion}.

\begin{figure}
\begin{minipage}[t]{.45\textwidth}
\begin{center}
\begin{picture}(6,5.5)(-3,-2)
\put(0,0){\usebox{\hexnetwork}}

{
\gasset{AHnb=1,ATnb=1,exo=-.5,sxo=-.5}
\drawedge(x1y1,x4y1){$\scriptstyle K$}
}

{
\gasset{AHnb=0,ATnb=0}

\drawedge(x1y1,x1y2){}
\drawedge(x1y2,x1y3){}
\drawedge(x1y3,x1y4){}

\drawedge(x2y1,x2y2){}
\drawedge(x2y2,x2y3){}

\drawedge(x3y1,x3y2){}

\drawedge(x4y1,x3y1){}
\drawedge(x3y1,x2y1){}
\drawedge(x2y1,x1y1){}

\drawedge(x3y2,x2y2){}
\drawedge(x2y2,x1y2){}

\drawedge(x2y3,x1y3){}

\drawedge(x1y4,x2y3){}
\drawedge(x2y3,x3y2){}
\drawedge(x3y2,x4y1){}

\drawedge(x1y3,x2y2){}
\drawedge(x2y2,x3y1){}

\drawedge(x1y2,x2y1){}
}

\end{picture}
\caption{Network with nodes positioned at hexagonal lattice. Each edge of the network consists of $K$ nodes.\label{fig:hexnetwork}}
\end{center}
\end{minipage}
\hspace{.05\textwidth}
\begin{minipage}[t]{.45\textwidth}
\begin{center}
\begin{picture}(6,5.5)(-3,-2)

\put(0,0){\usebox{\hexnetwork}}

{\gasset{NLangle=180,NLdist=.9}
\nodelabel(x1y1){$\scriptstyle S(x_1)$}
\nodelabel(x2y1){$\scriptstyle S(x_2)$}
\nodelabel(x3y1){$\scriptstyle S(x_3)$}
\nodelabel(x4y1){$\scriptstyle S(x_4)$}
}

{\gasset{NLangle=0,NLdist=.9}
\nodelabel(x1y4){$\scriptstyle R(x_1)$}
\nodelabel(x2y3){$\scriptstyle R(x_2)$}
\nodelabel(x3y2){$\scriptstyle R(x_3)$}
\nodelabel(x4y1){$\scriptstyle R(x_4)$}
}

{
\gasset{AHnb=0,ATnb=0}

\drawedge(x1y1,x1y2){}
\drawedge(x1y2,x1y3){}
\drawedge[AHnb=1](x1y3,x1y4){}

\drawedge(x2y1,x2y2){}
\drawedge[AHnb=1](x2y2,x2y3){}

\drawedge[AHnb=1](x3y1,x3y2){}
}

\end{picture}
\caption{Sources $S(\cdot)$ and receivers $R(\cdot)$ for first set of unicast connections defined on the network from Figure~\ref{fig:hexnetwork}.\label{fig:hextrafficx}}
\end{center}
\end{minipage}
\end{figure}

\begin{figure}
\begin{minipage}[b]{.45\textwidth}
\begin{center}
\begin{picture}(6,6.8)(-3,-2.5)

\put(0,0){\usebox{\hexnetwork}}

{
\gasset{Nfill=y,Nw=.3,Nh=.3,Nmr=.3,NLangle=240,NLdist=.7}

\nodelabel(x1y1){$\scriptstyle R(y_1)$}
\nodelabel(x1y2){$\scriptstyle R(y_2)$}
\nodelabel(x1y3){$\scriptstyle R(y_3)$}
\nodelabel(x1y4){$\scriptstyle R(y_4)$}

}

{
\gasset{Nfill=n,Nw=.3,Nh=.3,Nmr=.3,NLangle=	60,NLdist=.7}

\nodelabel(x4y1){$\scriptstyle S(y_1)$}
\nodelabel(x3y2){$\scriptstyle S(y_2)$}
\nodelabel(x2y3){$\scriptstyle S(y_3)$}
\nodelabel(x1y4){$\scriptstyle S(y_4)$}

}

{
\gasset{AHnb=0,ATnb=0}

\drawedge(x4y1,x3y1){}
\drawedge(x3y1,x2y1){}
\drawedge[AHnb=1](x2y1,x1y1){}

\drawedge(x3y2,x2y2){}
\drawedge[AHnb=1](x2y2,x1y2){}

\drawedge[AHnb=1](x2y3,x1y3){}
}

\end{picture}
\caption{Second set of unicast connections defined on the network from Figure~\ref{fig:hexnetwork}.\label{fig:hextrafficy}}
\end{center}
\end{minipage}
\hspace{.05\textwidth}
\begin{minipage}[b]{.45\textwidth}
\begin{center}
\begin{picture}(6,6.8)(-3,-2.5)

\put(0,0){\usebox{\hexnetwork}}

{
\gasset{Nfill=y,Nw=.3,Nh=.3,Nmr=.3,NLangle=120,NLdist=.7}

\nodelabel(x4y1){$\scriptstyle R(z_1)$}
\nodelabel(x3y1){$\scriptstyle R(z_2)$}
\nodelabel(x2y1){$\scriptstyle R(z_3)$}
\nodelabel(x1y1){$\scriptstyle R(z_4)$}
}

{
\gasset{Nfill=n,Nw=.3,Nh=.3,Nmr=.3,NLangle=	300,NLdist=.7}

\nodelabel(x1y4){$\scriptstyle S(z_1)$}
\nodelabel(x1y3){$\scriptstyle S(z_2)$}
\nodelabel(x1y2){$\scriptstyle S(z_3)$}
\nodelabel(x1y1){$\scriptstyle S(z_4)$}

}

{
\gasset{AHnb=0,ATnb=0}

\drawedge(x1y4,x2y3){}
\drawedge(x2y3,x3y2){}
\drawedge[AHnb=1](x3y2,x4y1){}

\drawedge(x1y3,x2y2){}
\drawedge[AHnb=1](x2y2,x3y1){}

\drawedge[AHnb=1](x1y2,x2y1){}
}

\end{picture}
\caption{Third set of unicast connections defined on the network from Figure~\ref{fig:hexnetwork}.\label{fig:hextrafficz}}
\end{center}
\end{minipage}
\end{figure}

\begin{lemma} \label{lem:routing}
The minimum energy consumption of any routing solution for the configuration from Figures~\ref{fig:hexnetwork}--\ref{fig:hextrafficz} is $3\binom{K}{2}$.
\end{lemma}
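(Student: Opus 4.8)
The plan is to prove matching upper and lower bounds, both equal to $3\binom{K}{2}$.

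For the upper bound I would exhibit a concrete routing scheme: send every symbol of each session straight along the lattice line joining its source and receiver, one broadcast per hop. First I would fix coordinates on the triangular lattice, writing a node as $x_iy_j$ with $i$ the row index and $j$ the position within the row, and record the source--receiver pairs read off from Figures~\ref{fig:hextrafficx}--\ref{fig:hextrafficz}: the session $x_i$ runs along row $i$ over $K-i$ hops, $y_j$ runs down column $j$ over $K-j$ hops, and the $k$-th $z$-session runs along the remaining diagonal over $K-k$ hops. Summing the hop counts within one family gives $\sum_{i=1}^{K}(K-i)=\binom{K}{2}$, so the three families deliver one symbol of each session with $3\binom{K}{2}$ transmissions. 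Hence the minimum routing energy is at most $3\binom{K}{2}$.

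The heart of the lemma is the matching lower bound, and this is the step I expect to require the most care. The idea is that routing forbids mixing: every transmitted symbol equals a single source symbol $s(\tau)$, so it can only advance the delivery of that one session's one symbol and is useless to every other session or time index. The transmissions therefore partition according to the pair (session, symbol index) they serve, and the cost of delivering one symbol of each session is the sum over sessions of the cost of delivering a single symbol in isolation. For a single symbol $s(\tau)$ I would argue by a frontier/induction argument that at least $d_s$ transmissions are needed, where $d_s$ is the graph distance from source to receiver: initially only the source knows $s(\tau)$, and each broadcast of $s(\tau)$ can shrink the minimum distance from the receiver to an informed node by at most one, so reaching the receiver takes at least $d_s$ broadcasts. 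It then remains to check that $d_s$ equals the hop count used above, i.e.\ that the straight lattice line is a geodesic. This follows from the hexagonal-distance formula $d\big(x_iy_j,\,x_{i'}y_{j'}\big)=\max\big(|i-i'|,\,|j-j'|,\,|(i+j)-(i'+j')|\big)$, which for each of the three session types evaluates to the number of hops along its line. Summing $d_s$ over all sessions reproduces $3\binom{K}{2}$.

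Two points need guarding in the lower bound, and they are the real obstacle. First, the energy measure is an average over time, so I must rule out that pipelining or amortization over many rounds beats $\sum_s d_s$; this is handled by the observation that each individual symbol $s(\tau)$ still demands its own $d_s$ disjoint transmissions, so over $T$ rounds the total is at least $T\sum_s d_s$ and the per-round average cannot drop below $3\binom{K}{2}$. Second, I must ensure the broadcast nature confers no advantage under routing: a broadcast reaches all neighbours at once, but since the symbol it carries is meaningful to only one session it still advances just that session's single frontier, so the per-symbol bound $d_s$ stands. With both directions in hand, the minimum routing energy is exactly $3\binom{K}{2}$.
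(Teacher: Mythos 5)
Your proposal is correct, and its constructive half is exactly the paper's argument: the paper's entire proof is the one sentence that in the minimum-cost routing solution each session's symbols follow the shortest route from source to receiver, as in Figures~\ref{fig:hextrafficx}--\ref{fig:hextrafficz}, giving $\sum_{i=1}^{K}(K-i)$ transmissions per family and $3\binom{K}{2}$ in total. Where you genuinely go beyond the paper is the lower bound: the paper simply asserts optimality of shortest-path routing, while you prove it, via the observation that routing forbids mixing (so each transmission serves exactly one (session, symbol-index) pair), the frontier argument that a single symbol needs at least its graph distance in broadcasts, the verification through the lattice-distance formula $\max\bigl(|i-i'|,|j-j'|,|(i+j)-(i'+j')|\bigr)$ that the straight lines are geodesics, and the check that neither broadcast reception nor amortization over time slots can beat $\sum_s d_s$ per delivered round. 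These are precisely the points a careful reader must supply to turn the paper's claim of a \emph{minimum} (not just an achievable value) into a theorem, so your version is strictly more complete; the cost is only length, since the configuration makes each of these steps routine.
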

\begin{proof}
In the minimum cost routing solution, symbols from each unicast connection follows the shortest route from source to receiver, as depicted in Figures~\ref{fig:hextrafficx} --~\ref{fig:hextrafficz}.
\end{proof}

\begin{lemma} \label{lem:nc}
For the configuration from Figures~\ref{fig:hexnetwork}--\ref{fig:hextrafficz} there exists a network coding solution that has energy consumption $3\binom{K+1}{2} - 2\binom{K-2}{2}$.
\end{lemma}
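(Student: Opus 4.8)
The plan is to exploit the threefold rotational symmetry of the configuration. First I would fix barycentric-type coordinates $(i,j,k)$ with $i+j+k=K-1$ and $i,j,k\ge 0$ for the nodes, where $i$, $j$, $k$ measure the distance to the right, bottom and left edges respectively; there are exactly $\binom{K+1}{2}$ such nodes. In these coordinates the three families of sessions become three families of parallel lines: the $x$-sessions run along lines of constant $j$ (moving so that $i$ decreases and $k$ increases, i.e.\ eastward), the $y$-sessions along lines of constant $k$ (south-westward), and the $z$-sessions along lines of constant $i$ (north-westward). The whole configuration is invariant under the cyclic relabelling $i\to j\to k\to i$, $x\to y\to z\to x$, and every node lies on exactly one line of each type, so exactly one session of each kind passes through it.

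Next I would specify an explicit $\FF_2$-linear code. In the bulk of the triangle each node broadcasts a single symbol that is a linear combination of the three session symbols currently passing through it, so that one broadcast simultaneously advances the $x$-flow to its eastern neighbour, the $y$-flow to its south-western neighbour, and the $z$-flow to its north-western neighbour. As anticipated in Section~\ref{sec:previous}, the intermediate nodes relay these combinations \emph{without} decoding the individual sessions; only the receivers decode. Near the three edges not all three session types need forwarding, so there I would let the nodes fall back to transmitting the individual combinations; the coefficients and the per-slot schedule must be chosen by symmetry so that the three rotated copies of the code are mutually consistent.

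The main obstacle is decodability. Because intermediate nodes deliberately forward undecoded combinations, one cannot argue session by session; instead I would set up the global system of linear relations induced by the broadcasts and prove, by induction, that every receiver can solve for its own stream. The induction would run on time (using $x(t)=0$ for $t\le 0$ as the base case) together with distance from the boundary, and would crucially use the side information supplied at the edges: by the cyclic structure each edge simultaneously hosts the sources of one session type and the receivers of another, so a decoding node can cancel the ``foreign'' symbols in a received combination using quantities it already knows. The delicate points are to show that at no receiver is the resulting system rank-deficient and that the scheduling makes each piece of side information available strictly before it is needed; verifying these two properties is where the real work lies.

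Finally I would count the broadcasts. The constructed scheme lets every strictly interior node --- those with $i,j,k\ge 1$, of which there are $\binom{K-2}{2}$ --- broadcast just once, the single combined symbol above, while every other node broadcasts three times. Hence the total number of transmissions is $3\bigl(\binom{K+1}{2}-\binom{K-2}{2}\bigr)+\binom{K-2}{2}=3\binom{K+1}{2}-2\binom{K-2}{2}$, as claimed. Note that this exceeds the routing cost of Lemma~\ref{lem:routing} for small $K$; the savings are confined to the interior, which is precisely why the benefit only approaches $3$ as $K\to\infty$.
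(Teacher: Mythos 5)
Your overall architecture and your counting step match the paper exactly --- one combined broadcast per strictly interior node, three broadcasts per border node, $3\bigl(\binom{K+1}{2}-\binom{K-2}{2}\bigr)+\binom{K-2}{2}=3\binom{K+1}{2}-2\binom{K-2}{2}$ --- but the proposal has a genuine gap at its core: the explicit code and the proof that every node can form its transmission and every receiver can decode are precisely the parts you defer (``the coefficients and the per-slot schedule must be chosen by symmetry,'' ``verifying these two properties is where the real work lies''). Worse, the code form you sketch is not the one that works. You propose that an interior node broadcast a combination of \emph{the three session symbols currently passing through it}. Consider then how a node $P$ on lines $(x_i,y_j,z_k)$ keeps the $x$-flow moving: the broadcast it hears from its left neighbour $F[P]$ contains symbols of sessions $y_{j-1}$ and $z_{k+1}$, which do not pass through $P$ and which $P$ never decodes (decoding at interior nodes is exactly what this construction must avoid); so $P$ can forward the $x_i$-component only if the foreign terms from its various neighbours cancel among themselves, and nothing in your induction on time and boundary distance establishes that such a cancellation pattern can be scheduled consistently around the lattice. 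That unresolved rank/consistency question is not a routine verification --- it is the whole difficulty.

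The paper resolves it not by a global linear-algebra argument but by transmitting different symbols: telescoped cross-session aggregates $\tilde x_i(t)=\sum_{\tau=0}^{i-1}x_{i-\tau}(t-\tau)$ (similarly $\tilde y_j$, $\tilde z_k$), with each interior node sending $\tilde x_i(t-\dx)+\tilde y_j(t-\dy)+\tilde z_k(t-\dz)$, where the indices and delays are tied to position via~\eqref{eq:indexing}. Note that this mixes \emph{all} lower-indexed $x$-sessions, not just the one passing through the node, so it is genuinely outside the class of codes you describe. With these aggregates, pairs of neighbour transmissions collapse (e.g.\ $A[P](t-1)+B[P](t)$ leaves only $y_j(t-\dy+1)$ plus $\tilde z$ terms), which yields the purely local recursion of Claim~\ref{claim:internal},
\begin{equation*}
P(t+1)=A[P](t-1)+B[P](t)+C[P](t-1)+D[P](t)+E[P](t-1)+F[P](t)+P(t-2),
\end{equation*}
analogous local rules at edges and corners (Claims~\ref{claim:border} and~\ref{claim:corner}), and one-step decoding at each receiver, e.g.\ $z_k(t-\dz)=Q_z(t)+B[Q]_z(t-1)$. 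So the ``real work'' you leave open is not completed by your sketched code; it is circumvented by replacing the per-node three-session combination with these aggregates, and any proof of the lemma must supply such an explicit mechanism.
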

We will prove Lemma~\ref{lem:nc} in Section~\ref{sec:construction} by constructing a network code that achieves this bound. The next theorem states our main result.
\begin{theorem}
There exist multiple unicast wireless networks for which network coding offers an energy benefit of $3$.
\end{theorem}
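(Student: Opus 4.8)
The plan is to obtain the theorem as an immediate consequence of Lemmas~\ref{lem:routing} and~\ref{lem:nc}, read through the definition of the energy benefit, followed by a limiting argument in the parameter $K$. Since the benefit of a configuration is the ratio of the minimum routing energy to the minimum network coding energy, Lemma~\ref{lem:routing} pins the numerator down exactly at $3\binom{K}{2}$, while Lemma~\ref{lem:nc} merely exhibits a coding solution of energy $3\binom{K+1}{2}-2\binom{K-2}{2}$ and hence only upper bounds the denominator. Dividing by a quantity no larger than the true minimum only increases the ratio, so the benefit of the hexagonal configuration with $K$ nodes per edge is at least
\begin{equation*}
\frac{3\binom{K}{2}}{3\binom{K+1}{2}-2\binom{K-2}{2}}.
\end{equation*}

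Next I would expand the binomial coefficients as polynomials in $K$. The numerator equals $\tfrac{3}{2}K^2-\tfrac{3}{2}K$, which grows like $\tfrac{3}{2}K^2$, whereas the two quadratic terms in the denominator partially cancel: the $3\binom{K+1}{2}$ term contributes $\tfrac{3}{2}K^2$ while the $-2\binom{K-2}{2}$ term subtracts $K^2$, leaving a denominator that grows like $\tfrac{1}{2}K^2$. The ratio of the leading coefficients is $\tfrac{3/2}{1/2}=3$, so the lower bound on the benefit tends to $3$ as $K\to\infty$.

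Finally I would let $K$ grow without bound. Because the displayed ratio can be brought arbitrarily close to $3$ by choosing $K$ large enough, the supremum of the energy benefit over all multiple unicast wireless configurations is at least $3$, which is precisely the assertion of the theorem and sharpens the previously known value of $2.4$ from~\cite{effros06tiling}.

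The point I would flag, rather than a genuine obstacle, is that the benefit stays strictly below $3$ for every finite $K$: a short computation gives $3\binom{K+1}{2}-2\binom{K-2}{2}-\binom{K}{2}=7K-6>0$ for all $K\geq 1$, so the denominator always exceeds one third of the numerator and no single network in the family attains the value $3$ exactly. The theorem must therefore be understood as a statement about the maximum possible (supremum) benefit, realised only in the limit of an unboundedly large lattice. All the real work sits in Lemma~\ref{lem:nc}: once the coding construction of Section~\ref{sec:construction} is granted, the theorem reduces to the short limit computation above.
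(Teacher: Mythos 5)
Your proof is correct and follows essentially the same route as the paper: combine Lemma~\ref{lem:routing} (exact routing cost) with Lemma~\ref{lem:nc} (an achievable coding cost, hence an upper bound on the minimum) and let $K\to\infty$ in the ratio $3\binom{K}{2}\big/\bigl(3\binom{K+1}{2}-2\binom{K-2}{2}\bigr)$. Your added observations---that Lemma~\ref{lem:nc} only bounds the denominator from above (which is the right direction for a lower bound on the benefit) and that the value $3$ is attained only in the limit, so the theorem is really a supremum statement---are accurate refinements of the paper's one-line limit computation.
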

\begin{proof}
The result follows from Lemmas~\ref{lem:routing} and~\ref{lem:nc} by taking the limit of $K$ to infinity, i.e.
\begin{equation*}
\lim_{K\to\infty} \frac{
3\binom{K}{2}
}{
3\binom{K+1}{2} - 2\binom{K-2}{2}
}
= 3.
\end{equation*}
\end{proof}

%
%
%
\section{Network Code Construction} \label{sec:construction}
In this section we construct a network code for which the energy consumption is according to Lemma~\ref{lem:nc}. We first introduce some notation. Let
\begin{equation*}
	\tilde x_i(t) = \sum_{\tau=0}^{i-1} x_{i-\tau}(t-\tau),
\end{equation*}
$i\in\{1,\dots,K\}$, with $\tilde y_j(t)$ and $\tilde z_k(t)$ defined similarly. Also, let $A[P], B[P], \dots, F[P]$ be the neighbours of a node $P$ as depicted in Figure~\ref{fig:hexneighbours}.
\begin{figure}
\begin{minipage}[t]{.45\textwidth}
\begin{center}
\begin{picture}(6,6.5)(-3,-3)

{
\gasset{Nfill=y,Nw=.3,Nh=.3,Nmr=.3,NLangle=0,NLdist=.7}


\node(x3y1)(-1,1.73){$\scriptstyle A[P]$}
\node(x3y2)(1,1.73){$\scriptstyle B[P]$}

\node(x2y1)(-2,0){$\scriptstyle F[P]$}
\node(x2y2)(0,0){$\scriptstyle P$}
\node(x2y3)(2,0){$\scriptstyle C[P]$}

\node(x1y2)(-1,-1.73){$\scriptstyle E[P]$}
\node(x1y3)(1,-1.73){$\scriptstyle D[P]$}

}

\end{picture}
\caption{The neighbours of a node $P$.\label{fig:hexneighbours}}
\end{center}
\end{minipage}
\hspace{.05\textwidth}
\begin{minipage}[t]{.45\textwidth}
\begin{center}
\begin{picture}(6,6.5)(-3,-2.5)

\put(0,0){\usebox{\hexnetwork}}

{\gasset{NLangle=270,NLdist=.6}
\nodelabel(x1y1){$\scriptstyle i=1,~\dx=0$}
}

{\gasset{NLangle=0,NLdist=1.5}
\nodelabel(x4y1){$\scriptstyle j=1,~\dy=0$}
}

{\gasset{NLangle=270,NLdist=.6}
\nodelabel(x1y4){$\scriptstyle k=1,~\dz=0$}
}

\end{picture}
\caption{Values for $i$, $j$, $k$, $\dx$, $\dy$ and $\dz$ for some nodes in the network. Remaining values follow from~\eqref{eq:indexing}.\label{fig:hexinit}}
\end{center}
\end{minipage}
\end{figure}
The code is defined by the following properties:
\begin{enumerate}
\item The data symbols transmitted by nodes are linear combinations of information symbols of the form
\begin{equation*}
	\tilde x_i(t - \dx) + \tilde y_j(t - \dy) + \tilde z_k(t - \dz),
\end{equation*}
where $t\in\mathbb{N}^+$ is the time slot and $\dx, \dy, \dz \in \mathbb{N}$ and $i, j, k \in \{1,\dots, K\}$ are per node constants, i.e. they may be different for each node, but are the same for all symbols transmitted by a specific node. 
\item Let $P(t)=\tilde x_i(t-\dx) + \tilde y_j(t-\dy) + \tilde z_k(t-\dz)$ be the symbol sent by node $P$ in time slot $t$. The symbols transmitted by its neighbours in that same time slot are
\begin{equation} \label{eq:indexing}
\begin{array}{lllllll}
	A[P](t) & = & \tilde x_{i+1}(t-\dx) & + & \tilde y_{j-1}(t-\dy+1) & + & \tilde z_{k}(t-\dz-1), \\
	B[P](t) & = & \tilde x_{i+1}(t-\dx-1) & + & \tilde y_{j}(t-\dy+1) & + & \tilde z_{k-1}(t-\dz), \\
	C[P](t) & = & \tilde x_{i}(t-\dx-1) & + & \tilde y_{j+1}(t-\dy) & + & \tilde z_{k-1}(t-\dz+1), \\
	D[P](t) & = & \tilde x_{i-1}(t-\dx) & + & \tilde y_{j+1}(t-\dy-1) & + & \tilde z_{k}(t-\dz+1), \\
	E[P](t) & = & \tilde x_{i-1}(t-\dx+1) & + & \tilde y_{j}(t-\dy-1) & + & \tilde z_{k+1}(t-\dz), \\
	F[P](t) & = & \tilde x_{i}(t-\dx+1) & + & \tilde y_{j-1}(t-\dy) & + & \tilde z_{k+1}(t-\dz-1).
\end{array}
\end{equation}
\item The exception to the above two rules comes from all nodes that are at an edge or corner of the network. These nodes transmit three data symbols in each time slot. If~\eqref{eq:indexing} dictates that a node should transmit $P(t)=\tilde x_i(t-\dx) + \tilde y_j(t-\dy) + \tilde z_k(t-\dz)$, and the node is at an edge or corner, it transmits three different symbols: $P_x(t)=\tilde x_i(t-\dx)$, $P_y(t)=\tilde y_j(t-\dy)$ and $P_z(t)=\tilde z_k(t-\dz)$. For notational convenience later on let
\begin{equation} \label{eq:psum}
	P(t)=P_x(t)+P_y(t)+P_z(t).
\end{equation}
Note that $P(t)$ is not actually transmitted by nodes at edges or corners of the network, but only a notational shortcut.
\item Let $R$ be the receiver of source $z_k$, i.e. $R$ is a node on the left edge of the network. 
Suppose that in time slot $t$, $R$ transmits $R_z(t)=\tilde z_k(t-\dz)$. In that same time slot node $R$ decodes source symbol $z_k(t-\dz)$. This implies that, after time slot $\dz$, one source symbol from $z_k$ is decoded each time slot. Similar decoding procedures are used at all other receivers.
\item The only thing that remains to be specified is the value of $i$, $j$, $k$, $\dx$, $\dy$ and $\dz$ for all nodes. We only specify some values at the corners of the network. These are given in Figure~\ref{fig:hexinit}. The remaining values follow from~\eqref{eq:indexing}.
\end{enumerate}

We need to show that the scheme is valid, i.e. that all nodes are able to produce the required linear combinations and that all receivers are able to decode. We need to analyze different cases, depending on the location of the node. We distinguish between nodes that are at corners of the network, at edges of the network and the remaining nodes, which we refer to as internal nodes. Since our construction is symmetric and homogeneous we will consider only the node in the top corner, an arbitrary node at the left edge, and an arbitrary internal node.
\begin{claim} \label{claim:internal}
Let $P$ be any internal node. The symbol $P(t+1)$ transmitted by $P$ in time slot $t+1$ satisfies
\begin{multline}
	P(t+1) = A[P](t-1) + B[P](t) + \\ C[P](t-1) + D[P](t) + E[P](t-1) + F[P](t) + P(t-2).
\end{multline}
\end{claim}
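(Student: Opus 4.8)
The plan is to verify Claim~\ref{claim:internal} by a direct substitution using the indexing rules~\eqref{eq:indexing}, since an internal node $P$ and all six of its neighbours $A[P],\dots,F[P]$ transmit honest single linear combinations of the form $\tilde x_i + \tilde y_j + \tilde z_k$ (the edge/corner exception in property~3 does not apply). First I would write out $P(t+1) = \tilde x_i(t+1-\dx) + \tilde y_j(t+1-\dy) + \tilde z_k(t+1-\dz)$ as the target, and separately evaluate the right-hand side by plugging the appropriate time arguments into the six formulas of~\eqref{eq:indexing} together with two self-terms. The key observation that makes this tractable is that the claimed combination mixes the neighbours at two different time offsets: $A,C,E$ are evaluated at $t-1$ while $B,D,F$ and the extra self-term $P$ are evaluated at the even offsets $t$ and $t-2$. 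I expect this staggering to be exactly what is needed to align the internal time indices.

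The core of the argument is to group the right-hand side by source type and check that each of the three groups collapses to the desired single term. Consider the $\tilde x$ contributions: from~\eqref{eq:indexing} the neighbours contribute $\tilde x_{i+1}$ (from $A,B$), $\tilde x_i$ (from $C,F$), and $\tilde x_{i-1}$ (from $D,E$), each at a specific shifted time. Substituting the offsets $t-1$ for $A,C,E$ and $t$ for $B,D,F$, and adding $\tilde x_i(t-2-\dx)$ from the $P(t-2)$ term, I would check that the terms indexed $i+1$ cancel in pairs, the terms indexed $i-1$ cancel in pairs, and the surviving terms indexed $i$ combine to leave exactly $\tilde x_i(t+1-\dx)$. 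The anticipated mechanism is that, over $\FF_2$, pairs of identical symbols cancel, and the telescoping definition $\tilde x_i(t) = \sum_{\tau=0}^{i-1} x_{i-\tau}(t-\tau)$ lets one rewrite $\tilde x_{i+1}$ and $\tilde x_{i-1}$ shifted terms in terms of $\tilde x_i$ so that everything reduces to the target. An identical computation runs for the $\tilde y$ and $\tilde z$ groups by symmetry of~\eqref{eq:indexing} under cyclic relabelling.

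The main obstacle, and the step I would be most careful about, is bookkeeping the time arguments correctly so that the cancellations are genuine and the telescoping identity applies to the correct indices. Concretely, the identity I would want to isolate is something like $\tilde x_{i+1}(t-\dx) + \tilde x_{i-1}(t-\dx) = \tilde x_i(t+1-\dx) + \tilde x_i(t-2-\dx)$ after collecting the surviving cross terms over $\FF_2$; verifying this relation from the definition of $\tilde x_i$ (and its $\tilde y,\tilde z$ analogues) is the real content, and getting the $\pm1$ time shifts to match is where an error would most likely creep in. Once the three per-source identities are established, summing them over the three source types yields $P(t+1)$ and the claim follows. I would present the $\tilde x$ computation in full and then invoke the symmetry of the indexing scheme to dispatch the $\tilde y$ and $\tilde z$ cases without repeating the algebra.
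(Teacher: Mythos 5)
Your strategy --- group the right-hand side by source type and let $\FF_2$ cancellation do the work --- is sound, and it is genuinely different from the paper's proof. The paper instead pairs neighbours, computing $A[P](t-1)+B[P](t)$, $C[P](t-1)+D[P](t)$ and $E[P](t-1)+F[P](t)$ separately; each pair is simplified with the recursion $\tilde x_i(s+1)=x_i(s+1)+\tilde x_{i-1}(s)$ (and its $y,z$ analogues), and the same recursion is used again at the end to reassemble the leftover pieces into $P(t+1)$. Your grouping makes that recursion unnecessary altogether: by~\eqref{eq:indexing}, the $\tilde x$-contributions to the right-hand side are $\tilde x_{i+1}(t-\dx-1)$ from each of $A[P](t-1)$ and $B[P](t)$; $\tilde x_{i-1}(t-\dx)$ from each of $D[P](t)$ and $E[P](t-1)$; $\tilde x_i(t-\dx-2)$ from each of $C[P](t-1)$ and $P(t-2)$; and $\tilde x_i(t-\dx+1)$ from $F[P](t)$. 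The three coincident pairs cancel over $\FF_2$ and the sole survivor is $F$'s term $\tilde x_i(t+1-\dx)$, which is exactly the $x$-component of $P(t+1)$; symmetrically, the $y$-group leaves only $\tilde y_j(t+1-\dy)$ (from $B[P](t)$) and the $z$-group only $\tilde z_k(t+1-\dz)$ (from $D[P](t)$). Nothing about the internal structure of $\tilde x_i$ is ever used: each $\tilde x_m(s)$ can be treated as an opaque formal symbol. In this sense your route is cleaner than the paper's.

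This, however, exposes the one real error in your proposal: the ``key identity'' you planned to isolate, $\tilde x_{i+1}(t-\dx)+\tilde x_{i-1}(t-\dx)=\tilde x_i(t+1-\dx)+\tilde x_i(t-2-\dx)$, is false. Its left side contains the source symbol $x_{i+1}(t-\dx)$ (via the definition of $\tilde x_{i+1}$), which no combination of $\tilde x_i$ and $\tilde x_{i-1}$ terms can produce. For the same reason, your anticipated use of the telescoping definition to ``rewrite $\tilde x_{i+1}$ and $\tilde x_{i-1}$ shifted terms in terms of $\tilde x_i$'' misdiagnoses the mechanism: once the time offsets are substituted correctly, those terms cancel in identical pairs and never need rewriting. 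Carrying out the bookkeeping you describe in your second paragraph would have revealed this; as written, though, your plan directs the reader to prove an identity that cannot be proved, so that step must be deleted rather than filled in.
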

\begin{proof}
Assume that $P(t)=\tilde x_i(t-\dx) + \tilde y_j(t-\dy) + \tilde z_k(t-\dz)$. We have
\begin{equation*}
	A[P](t-1)+B[P](t) = y_j(t-\dy+1) + \tilde z_k(t-\dz-2) + \tilde z_{k-1}(t-\dz).
\end{equation*}
\begin{equation*}
	C[P](t-1)+D[P](t) = \tilde x_i(t-\dx-2) + \tilde x_{i-1}(t-\dx) + z_{k}(t-\dz+1).
\end{equation*}
\begin{equation*}
	E[P](t-1)+F[P](t) = x_i(t-\dx+1) + \tilde y_j(t-\dy-2) + \tilde y_{j-1}(t-\dy).
\end{equation*}
The result follows from $\tilde x_i(t-\dx+1) = x_i(t-\dx+1)+\tilde x_{i-1}(t-\dx)$ and equivalent relations for $\tilde y_j(t-\dy+1)$ and $\tilde z_k(t-\dz+1)$. Note that if some of P's neighbours are on the border of the network we require~\eqref{eq:psum}.
\end{proof}

\begin{claim} \label{claim:border}
Let $Q$ be any node on the left edge of the network. Assume $Q_x(t)=\tilde x_i(t-\dx)$. The symbols transmitted by $Q$ in time slot $t+1$ satisfy 
\begin{align}
	Q_x(t+1) =\ &x_i(t+1-\dx) + E[Q]_x(t-1), \label{eq:qx} \\
	Q_y(t+1) =\ &B[Q]_y(t)
\intertext{and}
	Q_z(t+1) =\ &B[Q]_z(t)+C[Q](t-1)+D[Q](t)+ \notag \\ 
	                 &E[Q]_x(t-1)+Q_x(t-2).
\end{align}
\end{claim}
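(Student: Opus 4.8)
The plan is to prove all three identities by direct substitution into the indexing rule~\eqref{eq:indexing}, using that arithmetic is over $\FF_2$ so that any term appearing twice cancels. First I would fix the per-node constants $i,j,k,\dx,\dy,\dz$ of $Q$ and read off from~\eqref{eq:indexing} the symbols its neighbours transmit. Because $Q$ lies on the left edge, the positions $A[Q]$ and $F[Q]$ of Figure~\ref{fig:hexneighbours} fall outside the network, so only $B[Q],C[Q],D[Q],E[Q]$ are present; of these $B[Q]$ and $E[Q]$ are themselves edge (or corner) nodes and therefore transmit their $x$-, $y$- and $z$-parts as separate symbols, while $C[Q]$ and $D[Q]$ are handled through the combined-symbol shortcut~\eqref{eq:psum}. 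Pinning down this incidence data is the one genuinely delicate step, and it is where I expect the main obstacle to sit: one must check that every component named on the right-hand sides --- in particular $B[Q]_y$, $B[Q]_z$ and $E[Q]_x$ --- is actually available to $Q$, and that $Q_x(t-2)$ is one of $Q$'s own earlier transmissions, so that the prescribed symbols can really be formed.

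The two scalar identities then fall out immediately. For $Q_y$, the $y$-entry of the $B[P]$ row of~\eqref{eq:indexing} gives $B[Q]_y(t)=\tilde y_j(t-\dy+1)=Q_y(t+1)$, so $Q$ merely relays downward along the edge what it received from $B[Q]$. For $Q_x$, I would use the defining recursion $\tilde x_i(t+1-\dx)=x_i(t+1-\dx)+\tilde x_{i-1}(t-\dx)$ together with $E[Q]_x(t-1)=\tilde x_{i-1}(t-\dx)$; since $Q$ is the source of $x_i$ it supplies the fresh symbol $x_i(t+1-\dx)$ itself, which yields~\eqref{eq:qx}.

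The heart of the argument is the $z$-identity, and here I would expand each summand explicitly through~\eqref{eq:indexing} and~\eqref{eq:psum}:
\begin{align*}
	B[Q]_z(t) &= \tilde z_{k-1}(t-\dz), \\
	C[Q](t-1) &= \tilde x_i(t-\dx-2) + \tilde y_{j+1}(t-\dy-1) + \tilde z_{k-1}(t-\dz), \\
	D[Q](t) &= \tilde x_{i-1}(t-\dx) + \tilde y_{j+1}(t-\dy-1) + \tilde z_k(t-\dz+1), \\
	E[Q]_x(t-1) &= \tilde x_{i-1}(t-\dx), \\
	Q_x(t-2) &= \tilde x_i(t-\dx-2).
\end{align*}
Summing over $\FF_2$, the four terms $\tilde z_{k-1}(t-\dz)$, $\tilde x_i(t-\dx-2)$, $\tilde y_{j+1}(t-\dy-1)$ and $\tilde x_{i-1}(t-\dx)$ each occur twice and vanish, leaving exactly $\tilde z_k(t-\dz+1)=Q_z(t+1)$. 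Structurally this is the edge analogue of the internal telescoping of Claim~\ref{claim:internal}: the two neighbours $A[Q]$ and $F[Q]$ that are absent at the edge are compensated by the locally available $E[Q]_x(t-1)$ and $Q_x(t-2)$, which carry precisely the $x$-information those missing neighbours would otherwise have supplied. I would finish by noting that every summand is known to $Q$ from an earlier reception or its own past output, so the scheme is realizable at left-edge nodes.
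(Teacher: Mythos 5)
Your proof is correct and takes essentially the same route as the paper's: direct substitution of the indexing rule~\eqref{eq:indexing}, using that the left-edge node has only neighbours $B[Q]$, $C[Q]$, $D[Q]$, $E[Q]$, that $x_i(t+1-\dx)$ is available at $Q$ as a source symbol, and then cancellation over $\FF_2$ to isolate $\tilde z_k(t-\dz+1)$. The only cosmetic difference is bookkeeping: the paper first collapses $C[Q](t-1)+D[Q](t)$ to $\tilde x_i(t-\dx-2)+\tilde x_{i-1}(t-\dx)+z_k(t-\dz+1)$ and then absorbs $B[Q]_z(t)=\tilde z_{k-1}(t-\dz)$ via the recursion $\tilde z_k(t-\dz+1)=z_k(t-\dz+1)+\tilde z_{k-1}(t-\dz)$, whereas you cancel that term against $B[Q]_z(t)$ directly in a full term-by-term expansion, while also making explicit (which the paper leaves implicit) that $B[Q]$ and $E[Q]$ are themselves edge nodes whose split transmissions make $B[Q]_y$, $B[Q]_z$ and $E[Q]_x$ individually available to $Q$.
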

\begin{proof}
Assume that $Q_y(t)=\tilde y_j(t-\dy)$ and $Q_z(t)=\tilde z_k(t-\dz)$. Since $Q$ is on the left edge of the network it has only neighbours $B[Q]$, $C[Q]$, $D[Q]$ and $E[Q]$. Noting that $x_i(t+1-\dx)$ is available as a source symbol, the relations for $Q_x(t+1)$ and $Q_y(t+1)$ are readily verified. For $Q_z(t+1)$ we consider
\begin{equation*}
	C[Q](t-1)+D[Q](t) = \tilde x_i(t-\dx-2) + \tilde x_{i-1}(t-\dx) + z_k(t-\dz+1)
\end{equation*}
and note that $B[Q]_z(t)=\tilde z_{k-1}(t-\dz)$ and $E[Q]_x(t-1)=\tilde x_{i-1}(t-\dx)$.
\end{proof}

\begin{claim} \label{claim:corner}
Let $R$ be the node in the top corner of the network. The symbols transmitted by $R$ in time slot $t+1$ satisfy
\begin{align}
	R_x(t+1) =\ &x_K(t+2-K) + E[Q]_x(t-1), \label{eq:rx} \\
	R_y(t+1) =\ &y_1(t+1) \label{eq:ry}
\intertext{and}
	R_z(t+1) =\ &D_z[R](t). \label{eq:rz}
\end{align}
\end{claim}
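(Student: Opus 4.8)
The plan is to first pin down the six per-node constants $(i,j,k,\dx,\dy,\dz)$ at the top corner $R$, and then read the three transmitted symbols $R_x,R_y,R_z$ straight off their definitions, matching each against the claimed right-hand side. Because $R$ is a corner node, the rule for edge and corner nodes guarantees that it transmits the three components separately, so the identities \eqref{eq:rx}--\eqref{eq:rz} may be checked one at a time.

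First I would fix the constants at $R$. They are determined by the three initializations in Figure~\ref{fig:hexinit} together with the transition rules implicit in \eqref{eq:indexing}: each line of \eqref{eq:indexing} records how $(i,j,k,\dx,\dy,\dz)$ transforms when one steps from a node to a specified neighbour. Walking from $R$ down the left edge to the bottom-left corner is three $E$-steps, each acting by $(i,\dx)\mapsto(i-1,\dx-1)$; comparing with the value $i=1,\dx=0$ prescribed there forces $i=K$ and $\dx=K-1$ at $R$. Walking from the bottom-right corner up to $R$ is three $A$-steps, each acting by $(k,\dz)\mapsto(k,\dz+1)$; comparing with $k=1,\dz=0$ there forces $k=1$ and $\dz=K-1$ at $R$. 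The remaining pair $j=1,\dy=0$ is given directly. Hence
\begin{equation*}
R_x(t)=\tilde x_K(t-K+1),\qquad R_y(t)=\tilde y_1(t),\qquad R_z(t)=\tilde z_1(t-K+1).
\end{equation*}

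With these values the three identities are short. Equation \eqref{eq:ry} is immediate: $\tilde y_1=y_1$ gives $R_y(t+1)=y_1(t+1)$, which $R$ holds as a source symbol since it is $S(y_1)$. For \eqref{eq:rx} I would apply the telescoping identity $\tilde x_K(s)=x_K(s)+\tilde x_{K-1}(s-1)$ at $s=t+2-K$ and then identify the trailing term $\tilde x_{K-1}(t-K+1)$ with $E[R]_x(t-1)$ (the quantity the claim writes $E[Q]_x(t-1)$), using $E[R]=(K-1,1,2,K-2,1,K-1)$. Equation \eqref{eq:rz} is analogous: $\tilde z_1=z_1$ gives $R_z(t+1)=z_1(t+2-K)$, while $D[R]=(K-1,2,1,K-1,1,K-2)$ yields $D[R]_z(t)=z_1(t-K+2)$, so the two sides agree.

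The one genuinely delicate step is the first: correctly propagating the six constants from the corner initializations, since a slip in any single shift would misalign the time arguments throughout. Two bookkeeping points should also be confirmed to make the statement meaningful and causal. First, the neighbours $E[R]$ and $D[R]$ are themselves edge nodes (the left- and right-edge nodes adjacent to the corner), so by the rule for edge nodes the components $E[R]_x$ and $D[R]_z$ are genuinely transmitted. Second, the right-hand sides use only information available to $R$ in time: $E[R]_x(t-1)$ and $D[R]_z(t)$ are received strictly before slot $t+1$, and $x_K(t+2-K)$, $y_1(t+1)$ are source symbols of the sessions $x_K$ and $y_1$ that originate at $R$. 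With these in place the claim follows.
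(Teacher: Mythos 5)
Your proposal is correct and takes essentially the same route as the paper's own proof: determine the per-node constants at $R$ from Figure~\ref{fig:hexinit} and the shift rules in~\eqref{eq:indexing} (the paper states $i=K$, $j=1$, $\dx=K-1$, $\dy=0$; you additionally make explicit $k=1$, $\dz=K-1$, which is needed for~\eqref{eq:rz}), then prove~\eqref{eq:rx} by the same telescoping argument as for~\eqref{eq:qx} in Claim~\ref{claim:border} and verify~\eqref{eq:ry} and~\eqref{eq:rz} directly. Your reading of $E[Q]_x(t-1)$ in~\eqref{eq:rx} as a typo for $E[R]_x(t-1)$ is also the intended interpretation.
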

\begin{proof}
Note that from~\eqref{eq:indexing} and Figure~\ref{fig:hexinit} it follows that for $R$: $i=K$, $j=1$, $\dx=K-1$ and $\dy=0$. The proof of~\eqref{eq:rx} is equivalent to the proof of~\eqref{eq:qx}. Relations~\eqref{eq:ry} and~\eqref{eq:rz} can  be easily verified.
\end{proof}

\begin{proof}[Proof of Lemma~\ref{lem:nc}]
First we need to prove that the scheme is valid. From Claims~\ref{claim:internal} --~\ref{claim:corner}  it follows that all nodes can produce the required linear combinations of symbols to transmit. We also need to show that source symbols can be decoded. Consider node $Q$ on the left edge of the network that is the receiver of $z_k$. Suppose that in time slot $t$ it transmits $Q_z(t) = \tilde z_k(t-\dz)$. It can recover $z_k(t-\dz)$ as
\begin{equation*}
	z_k(t-\dz) = Q_z(t) + B[Q]_z(t-1).
\end{equation*}
Node $R$ in the top corner does not have a neighbour $B[R]$, but it needs to decode $z_1(t-\dz)$ for which $ \tilde z_1(t-\dz) = R_z(t) = z_1(t-\dz)$.

The contributions to the energy consumption are $1$ for each of the $\binom{K-2}{2}$ internal nodes and $3$ for each of the $\binom{K+1}{2} - \binom{K-2}{2}$ nodes at the border of the network.
\end{proof}

%
%
%
\section{Discussion} \label{sec:discussion}
We have shown that the energy benefit of network coding for multiple unicast in wireless networks is at least $3$. The network topology that we use in our constructive proof is the same as used in~\cite{effros06tiling} in which a lower bound of $2.4$ was obtained. The difference is in the traffic pattern used.
In~\cite{effros06tiling} the number of unicast sessions  is smaller than considered in this paper. It is mentioned in~\cite{effros06tiling} that for the number of unicast sessions used in this paper there does not seem to be a valid network code for which: 1)  internal nodes in the network transmit only once per decoded source symbol and 2) data symbols transmitted by a node are linear combinations only of source symbols that have been successfully decoded by that node, i.e. satisfying the property discussed in Section~\ref{sec:previous}.

We have obtained a valid code satisfying 1), but not 2). It is not known if a code satisfying both properties exists.

The only known upper bound to the energy benefit of network coding for multiple unicast in wireless networks comes from~\cite{liu07scale}, in which only a restricted class of network codes is considered. It is an open problem to find general upper bounds.

%
%
%
\bibliographystyle{IEEEtran}
\bibliography{IEEEabrv,networkcoding}

\end{document}